\renewcommand{\theenumii}{\@roman\c@enumii}
\newtheorem{lemma}{Lemma}
\newtheorem{theorem}{Theorem}
\begin{document}

\title{Sion's mini-max theorem and Nash equilibrium in a five-players game with two groups which is zero-sum and symmetric in each group\thanks{This work was supported by Japan Society for the Promotion of Science KAKENHI Grant Number 15K03481 and 18K0159.}}

\author{
Atsuhiro Satoh\thanks{atsatoh@hgu.jp}\\[.01cm]
Faculty of Economics, Hokkai-Gakuen University,\\[.02cm]
Toyohira-ku, Sapporo, Hokkaido, 062-8605, Japan,\\[.01cm]
\textrm{and} \\[.1cm]
Yasuhito Tanaka\thanks{yasuhito@mail.doshisha.ac.jp}\\[.01cm]
Faculty of Economics, Doshisha University,\\
Kamigyo-ku, Kyoto, 602-8580, Japan.\\}

\date{}

\maketitle
\thispagestyle{empty}

\vspace{-1.2cm}

\begin{abstract}
We consider the relation between Sion's minimax theorem for a continuous function and a Nash equilibrium in a five-players game with two groups which is zero-sum and symmetric in each group. We will show the following results.
\begin{enumerate}
	\item The existence of Nash equilibrium which is symmetric in each group implies Sion's minimax theorem for a pair of playes in each group. 
	\item Sion's minimax theorem for a pair of playes in each group imply the existence of a Nash equilibrium which is symmetric in each group.
\end{enumerate}
Thus, they are equivalent. An example of such a game is a relative profit maximization game in each group under oligopoly with two groups such that firms in each group have the same cost functions and maximize their relative profits in each group, and the demand functions are symmetric for the firms in each group.
\end{abstract}

\begin{description}
	\item[Keywords:] five-players zero-sum game, two groups, Nash equilibrium, Sion's minimax theorem
\end{description}

\begin{description}
	\item[JEL Classification:] C72
\end{description}

\section{Introduction}

We consider the relation between Sion's minimax theorem for a continuous function and existence of a Nash equilibrium in a five-players game with two groups which is zero-sum and symmetric in each group. There are five players, A, B, C, D and E. Players A, B and E are in one group, and Players C and D are in the other group. Players A, B and E have the same payoff functions and strategy spaces, and they play a game which is zero-sum in this group, that is, the sum of the payoffs of Players A, B and E is zero. Similarly, Players C and D have the same payoff functions and strategy spaces, and they play a game which is zero-sum in this group, that is, the sum of the payoffs of Players C and D is zero.

We will show the following results.
\begin{enumerate}
	\item The existence of Nash equilibrium which is symmetric in each group implies Sion's minimax theorem for a pair of playes in each group. 
	\item Sion's minimax theorem for a pair of playes in each group imply the existence of a Nash equilibrium which is symmetric in each group.
\end{enumerate}
Thus, they are equivalent. The coincidence of the maximin strategy and the minimax strategy is assumed in Assumption \ref{as1}

An example of such a game is a relative profit maximization game in each group under oligopoly with two groups such that firms in each group have the same cost functions and maximize their relative profits in each group, and the demand functions are symmetric for the firms in each group. Consider a five firms oligopoly. Let $\bar{\pi}_A$, $\bar{\pi}_B$, $\bar{\pi}_C$, $\bar{\pi}_D$ and $\bar{\pi}_E$ be the absolute profits of, respectively, Firms A, B, C, D and E. Then, the relative profits of Firms A, B and E are
\[\pi_A=\bar{\pi}_A-\frac{1}{2}(\bar{\pi}_B+\bar{\pi}_E),\]
\[\pi_B=\bar{\pi}_B-\frac{1}{2}(\bar{\pi}_A+\bar{\pi}_E),\]
\[\pi_E=\bar{\pi}_B-\frac{1}{2}(\bar{\pi}_A+\bar{\pi}_B).\]
The relative profits of Firms C and D are
\[\pi_C=\bar{\pi}_C-\bar{\pi}_D,\]
\[\pi_D=\bar{\pi}_D-\bar{\pi}_C.\]
We see
\[\pi_A+\pi_B+\pi_E=0,\]
\[\pi_C+\pi_D=0.\]
Firms A, B, C, D and E maximize, respectively, $\pi_A$, $\pi_B$, $\pi_C$, $\pi_D$ and $\pi_E$. Thus, the relative profit maximization game in each group is a zero-sum game\footnote{About relative profit maximization under imperfect competition please see \cite{mm}, \cite{ebl2}, \cite{eb2}, \cite{st}, \cite{eb1}, \cite{ebl1} and \cite{redondo}}. In Section \ref{ex} we present an example of relative profit maximization in each group under oligopoly with two groups. In that example we assume that the demand functions for Firms A, B and E are symmetric, those for Firm C and are symmetric, Firms A, B and E have the same cost function, and Firms C and D have the same cost function.

\section{The model and Sion's minimax theorem}

Consider a five-players game with two groups. The game is zero-sum in each group. There are five players, A, B, C, D and E. The strategic variables for Players A, B, C, D and E are, respectively, $s_A$, $s_B$, $s_C$, $s_D$, $s_E$, and ($s_A, s_B, s_C, s_D, s_E)\in S_A\times S_B\times S_C\times S_D\times S_E$. $S_A$, $S_B$, $S_C$, $S_D$, $S_E$ are convex and compact sets in linear topological spaces. The payoff function of each player is $u_i(s_A, s_B, s_C, s_D, s_E),\ i=A, B, C, D, E$. They are real valued functions on $S_A\times S_B\times S_C\times S_D\times S_E$. We assume 
\begin{quote}
$u_A$, $u_B$, $u_C$, $u_D$ and $u_E$ are continuous on $S_A\times S_B \times S_C \times S_D \times S_E$, quasi-concave on $S_i$ for each $s_j\in S_j,\ j\neq i$, and quasi-convex on $S_j$ for $j\neq i$ for each $s_i\in S_i,\ i=A, B, C, D, E$,
\end{quote}
and
\[S_A=S_B=S_E,\ \mathrm{and}\ S_C=S_D.\]
There are two groups. Players A,  B and E are in one group, and Players C and D are in the other group. The game is symmetric for the players in each group in the sense that Players A, B and E have the same payoff functions and strategy spaces, and Players C and D have the same payoff functions and strategy spaces. 

The game is zero-sum in each group. Therefore,
\begin{equation}
u_A(s_A, s_B, s_C, s_D, s_E)+u_B(s_A, s_B, s_C, s_D)+u_E(s_A, s_B, s_C, s_D, s_E)=0,\label{e1}
\end{equation}
\begin{equation}
u_C(s_A, s_B, s_C, s_D, s_E)+u_D(s_A, s_B, s_C, s_D, s_E)=0,\label{e1a}
\end{equation}
for given $(s_A, s_B, s_C, s_D,, s_E)$.

Sion's minimax theorem (\cite{sion}, \cite{komiya}, \cite{kind}) for a continuous function is stated as follows.
\begin{lemma}
Let $X$ and $Y$ be non-void convex and compact subsets of two linear topological spaces, and let $f:X\times Y \rightarrow \mathbb{R}$ be a function that is continuous and quasi-concave in the first variable and continuous and quasi-convex in the second variable. Then
\[\max_{x\in X}\min_{y\in Y}f(x,y)=\min_{y\in Y}\max_{x\in X}f(x,y).\] \label{l1}
\end{lemma}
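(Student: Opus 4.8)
The plan is to prove the two inequalities separately. The direction
\[\max_{x\in X}\min_{y\in Y} f(x,y)\ \le\ \min_{y\in Y}\max_{x\in X} f(x,y)\]
is the trivial ``weak duality'' that holds for \emph{any} function: for arbitrary $x_0\in X$ and $y_0\in Y$ one has $\min_{y} f(x_0,y)\le f(x_0,y_0)\le \max_{x} f(x,y_0)$, and taking the maximum over $x_0$ on the left and the minimum over $y_0$ on the right preserves the inequality. Before doing this I would record that the four extrema are genuinely attained: since $f$ is continuous and $Y$ is compact, $x\mapsto\min_{y} f(x,y)$ is continuous, so its maximum over the compact set $X$ exists, and symmetrically for the other order; hence writing $\max/\min$ rather than $\sup/\inf$ is justified.

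The whole content lies in the reverse inequality $v:=\max_{x}\min_{y} f \ge w:=\min_{y}\max_{x} f$, which I would establish by contradiction. Suppose $v<w$ and fix a real $\beta$ with $v<\beta<w$. Because $\min_{y} f(x,y)\le v<\beta$ for every $x$, each $x$ lies in some set $U_y:=\{x\in X: f(x,y)<\beta\}$, and these sets are open by continuity of $f(\cdot,y)$; thus $\{U_y\}_{y\in Y}$ is an open cover of the compact set $X$. Extracting a finite subcover $U_{y_1},\dots,U_{y_n}$ gives $\min_{1\le j\le n} f(x,y_j)<\beta$ for every $x$, hence
\[\max_{x\in X}\ \min_{1\le j\le n} f(x,y_j)\ \le\ \beta\ <\ w.\]
This reduces everything to a finite statement about the points $y_1,\dots,y_n$.

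The key lemma I then need is the finite minimax inequality
\[\max_{x\in X}\ \min_{1\le j\le n} f(x,y_j)\ \ge\ \min_{y\in\mathrm{conv}\{y_1,\dots,y_n\}}\ \max_{x\in X} f(x,y),\]
whose right-hand side is at least $w$ because $\mathrm{conv}\{y_1,\dots,y_n\}\subseteq Y$ by convexity of $Y$. Combined with the displayed bound $\le\beta<w$, this produces the contradiction and closes the argument. I would prove the key lemma by induction on $n$, the case $n=1$ being an identity.

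The inductive step is the heart of the proof and the step I expect to be the main obstacle: it amounts to ``merging'' two of the points, say $y_1$ and $y_2$, into a single point on the segment $[y_1,y_2]$, thereby lowering $n$. Here one exploits quasi-convexity of $f(x,\cdot)$ to bound $f\bigl(x,(1-t)y_1+ty_2\bigr)\le\max\bigl(f(x,y_1),f(x,y_2)\bigr)$ along the segment, quasi-concavity of $f(\cdot,y)$ so that super-level sets in $x$ are convex, and a one-dimensional intermediate-value (connectedness) argument in the parameter $t\in[0,1]$ to locate a merged point $y_t$ together with a matching point in $x$ that makes the reduction valid. The subtlety is precisely that quasi-concavity and quasi-convexity control only level sets, not values, so the merging cannot be carried out by naive averaging of payoffs; it must be routed through the convexity of the level sets and the connectedness of the interval. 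This is the mechanism used in \cite{sion} and streamlined in \cite{komiya} and \cite{kind}, and it is the only place where the quasi-concavity/quasi-convexity hypotheses do real work.
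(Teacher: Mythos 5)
A preliminary remark on the comparison itself: the paper contains no proof of Lemma \ref{l1}. This lemma \emph{is} Sion's minimax theorem; the paper states it, cites \cite{sion}, \cite{komiya} and \cite{kind}, and thereafter uses it as a black box. So there is no ``paper's own proof'' to measure your attempt against, and the relevant benchmark is the literature the paper cites. Against that benchmark, your outline reproduces the standard architecture of Komiya's elementary proof: the weak-duality inequality (correct and trivially so), attainment of the four extrema (also fine --- projection along a compact factor preserves openness and closedness, so $x\mapsto\min_{y\in Y}f(x,y)$ is continuous), the contradiction $v<\beta<w$, the finite subcover $U_{y_1},\dots,U_{y_n}$ of $X$, and the reduction to a finite minimax inequality over $\mathrm{conv}\{y_1,\dots,y_n\}$ to be proved by induction. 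All of these reductions are correct.

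The gap is that the one step you do not carry out --- the two-point ``merging'' lemma inside the induction --- is not a technical detail but the entire mathematical content of the theorem. Everything you actually execute uses neither convexity of $X$ and $Y$, nor quasi-concavity, nor quasi-convexity; those hypotheses enter only in the step you describe as ``locate a merged point $y_t$ together with a matching point in $x$ that makes the reduction valid,'' and that sentence is a pointer to an argument, not an argument. Concretely, what is needed is: if $\max_{x\in X}\min\bigl(f(x,y_1),f(x,y_2)\bigr)<\beta$, then some $y_0\in[y_1,y_2]$ satisfies $\max_{x\in X}f(x,y_0)\le\beta$. The proof of this (following \cite{komiya}) picks $\beta'$ with $\max_{x\in X}\min\bigl(f(x,y_1),f(x,y_2)\bigr)<\beta'<\beta$, observes that $H_i=\{x:f(x,y_i)\ge\beta'\}$, $i=1,2$, are disjoint closed sets, that for each $z\in[y_1,y_2]$ the set $X_z=\{x:f(x,z)\ge\beta\}$ is closed, convex (quasi-concavity in $x$), and contained in $H_1\cup H_2$ (quasi-convexity in $y$), so each nonempty $X_z$, being connected, lies entirely in $H_1$ or in $H_2$; if every $X_z$ were nonempty, the parameter sets $J_i=\{z:X_z\subseteq H_i\}$ would be closed (here compactness of $X$ and joint continuity are used), disjoint, nonempty, and would cover the segment, contradicting its connectedness. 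Moreover, even granting this lemma, your induction is not the naive ``merge and recurse'': to pass from $n$ to $n-1$ points one must apply the two-point lemma on the restricted domain $\{x\in X:f(x,y_j)\ge\beta\ \text{for}\ j\ge 3\}$, which requires checking that this set is again compact and convex. None of this is recoverable from your sketch, and routing it through ``the mechanism used in \cite{sion}, \cite{komiya} and \cite{kind}'' makes the proposal circular as a proof of Sion's theorem: it invokes the theorem's own proofs exactly where the proof has to happen. In effect your attempt ends where the paper begins --- importing the hard step from the literature --- though you have correctly isolated the single lemma to which everything reduces.
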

We follow the description of this theorem in \cite{kind}.

Let $s_C$, $s_D$ and $s_E$ be given. Then, $u_A(s_A, s_B, s_C, s_D, s_E)$ is a function of $s_A$ and $s_B$. We can apply Lemma \ref{l1} to such a situation, and get the following equation.
\begin{equation}
\max_{s_A\in S_A}\min_{s_B\in S_B}u_A(s_A,s_B,s_C,s_D,s_E)=\min_{s_B\in S_B}\max_{s_A\in S_A}u_A(s_A, s_B, s_C,s_D,s_E).\label{as0}
\end{equation}
By symmetry we require
\begin{equation*}
\max_{s_E\in S_E}\min_{s_B\in S_B}u_E(s_B,s_B,s_C,s_D,s_E)=\min_{s_B\in S_B}\max_{s_E\in S_E}u_E(s_B, s_B, s_C,s_D,s_E).
\end{equation*}
However, we do not require
\begin{equation*}
\max_{s_B\in S_B}\min_{s_A\in S_A}u_B(s_A,s_B,s_C,s_D,s_E)=\min_{s_A\in S_A}\max_{s_B\in S_B}u_B(s_A, s_B, s_C,s_D,s_E),
\end{equation*}
nor
\begin{equation*}
\max_{s_A\in S_A}\min_{s_E\in S_E}u_A(s_A,s_B,s_C,s_D,s_E)=\min_{s_E\in S_E}\max_{s_A\in S_A}u_A(s_A, s_B, s_C,s_D,s_E),
\end{equation*}
nor
\begin{equation*}
\max_{s_E\in S_E}\min_{s_A\in S_A}u_E(s_A,s_B,s_C,s_D,s_E)=\min_{s_A\in S_A}\max_{s_E\in S_E}u_E(s_A, s_B, s_C,s_D,s_E),
\end{equation*}
nor
\begin{equation*}
\max_{s_B\in S_B}\min_{s_E\in S_E}u_B(s_B,s_B,s_C,s_D,s_E)=\min_{s_E\in S_E}\max_{s_B\in S_B}u_B(s_B, s_B, s_C,s_D,s_E).
\end{equation*}

Similarly, given $s_A$, $s_B$ and $s_E$ we get
\begin{equation}
\max_{s_C\in S_C}\min_{s_D\in S_D}u_C(s_A,s_B,s_C,s_D,s_E)=\min_{s_D\in S_D}\max_{s_C\in S_C}u_C(s_A, s_B, s_C,s_D,s_E).\label{as0a}
\end{equation}
We do not require
\begin{equation*}
\max_{s_D\in S_D}\min_{s_C\in S_C}u_D(s_A,s_B,s_C,s_D,s_E)=\min_{s_C\in S_C}\max_{s_D\in S_D}u_D(s_A, s_B, s_C,s_D,s_E).
\end{equation*}
We assume that $\arg\max_{s_A\in S_A}\min_{s_B\in S_B}u_A(s_A, s_B, s_C, s_D,s_E)$, $\arg\max_{s_E\in S_E}\min_{s_B\in S_B}u_E(s_A, s_B, s_C, s_D,s_E)$ and $\arg\max_{s_C\in S_C}\min_{s_D\in S_D}u_C(s_A,s_B,s_C,s_D,s_E)$ are unique, that is, single-valued. By the maximum theorem they are continuous. Also, throughout this paper we assume that the maximin strategy and the minimax strategy of players in any situation are unique, and the best responses of players in any situation are unique.

Let us consider a point such that $s_A=s_B=s_E=s$ and $s_C=s_D=s'$, and consider the following function.
\[
\begin{pmatrix}
s\\
s'
\end{pmatrix}
\rightarrow 
\begin{pmatrix}
\arg\max_{s_A\in S_A}\min_{s_B\in S_B}u_A(s_A,s_B,s',s',s)\\\arg\max_{s_C\in S_C}\min_{s_D\in S_D}u_C(s,s,s_C,s_D,s)
\end{pmatrix}.
\]
Since $u_A$ and $u_C$ are continuous, $S_A=S_B=S_E$ are compact and $S_C=S_D$ are compact, these functions are also continuous. Thus, there exists a fixed point of $(s,s')$. Denote it by $(\tilde{s},\hat{s})$. It satisfies
\begin{equation}
\tilde{s}=\arg\max_{s_A\in S_A}\min_{s_B\in S_B}u_A(s_A,s_B,\hat{s},\hat{s},\tilde{s}),\label{fix}
\end{equation}
\begin{equation}
\hat{s}=\arg\max_{s_C\in S_C}\min_{s_D\in S_D}u_C(\tilde{s},\tilde{s},s_C,s_D,\tilde{s}).\label{fixa}
\end{equation}
By symmetry we get
\[\tilde{s}=\arg\max_{s_E\in S_E}\min_{s_B\in S_B}u_E(\tilde{s},s_B,\hat{s},\hat{s},s_E),\]
However we do not require 
\[\tilde{s}=\arg\max_{s_B\in S_B}\min_{s_A\in S_A}u_B(s_A,s_B,\hat{s},\hat{s},\tilde{s}),\]
\[\tilde{s}=\arg\max_{s_A\in S_A}\min_{s_E\in S_E}u_A(s_A,\tilde{s},\hat{s},\hat{s},s_E),\]
\[\tilde{s}=\arg\max_{s_E\in S_E}\min_{s_A\in S_A}u_E(s_A,\tilde{s},\hat{s},\hat{s},s_E),\]
\[\tilde{s}=\arg\max_{s_B\in S_B}\min_{s_E\in S_E}u_B(\tilde{s},s_B,\hat{s},\hat{s},s_E),\]
\[\hat{s}=\arg\max_{s_D\in S_D}\min_{s_C\in S_C}u_D(\tilde{s},\tilde{s},s_C,s_D,\tilde{s}).\]

\section{The main results}

Consider a Nash equilibrium which is symmetric in each group. Let $s_A^*$, $s_B^*$, $s_C^*$, $s_D^*$, $s_E^*$ be the values of $s_A$, $s_B$, $s_C$, $s_D$, $s_E$ which, respectively, maximize $u_A$, $u_B$, $u_C$, $u_D$, $u_E$, that is, 
\[u_A(s_A^*,s_B^*,s_C^*,s_D^*,s_E^*)\geq u_A(s_A,s_B^*,s_C^*,s_D^*,s_E^*)\ \mathrm{for\ any}\ s_A\in S_A,\]
\[u_B(s_A^*,s_B^*,s_C^*,s_D^*,s_E^*)\geq u_B(s_A^*,s_B,s_C^*,s_D^*,s_E^*)\ \mathrm{for\ any}\ s_B\in S_B,\]
\[u_C(s_A^*,s_B^*,s_C^*,s_D^*,s_E^*)\geq u_C(s_A^*,s_B^*,s_C,s_D^*,s_E^*)\ \mathrm{for\ any}\ s_C\in S_C,\]
\[u_D(s_A^*,s_B^*,s_C^*,s_D^*,s_E^*)\geq u_D(s_A^*,s_B^*,s_C^*,s_D,s_E^*)\ \mathrm{for\ any}\ s_D\in S_D,\]
and
\[u_E(s_A^*,s_B^*,s_C^*,s_D^*,s_E^*)\geq u_E(s_A^*,s_B^*,s_C^*,s_D^*,s_E)\ \mathrm{for\ any}\ s_E\in S_E.\]

If the Nash equilibrium is symmetric in each group, $s_A^*$, $s_B^*$ and $s_E^*$ are equal, and $s_C^*$ and $s_D^*$ are equal.

We show the  following theorem.
\begin{theorem}
The existence of Nash equilibrium which is symmetric in each group implies Sion's minimax theorem with the coincidence of the maximin strategy and the minimax strategy. 
\label{t1}
\end{theorem}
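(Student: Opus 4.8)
The plan is to show that a symmetric Nash equilibrium induces a saddle point of $u_A$ in the variables $(s_A,s_B)$ and of $u_C$ in the variables $(s_C,s_D)$, evaluated at the equilibrium values of the remaining strategies; the minimax equalities \eqref{as0} and \eqref{as0a} and the coincidence of the maximin and minimax strategies then follow from the standard saddle-point argument. Write the symmetric equilibrium as $s_A^*=s_B^*=s_E^*=\tilde s$ and $s_C^*=s_D^*=\hat s$. Since the equilibrium is symmetric within each group, $u_A=u_B=u_E$ and $u_C=u_D$ hold at the equilibrium profile, so by \eqref{e1} and \eqref{e1a} all five payoffs vanish there. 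Throughout I use the uniqueness of best responses and of maximin/minimax strategies assumed above, so that the relevant $\arg\max$ and $\arg\min$ are single points and the word ``coincidence'' is meaningful.

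The main obstacle is the three-player group $\{A,B,E\}$: the minimax statement \eqref{as0} opposes $s_A$ and $s_B$ through the single payoff $u_A$, whereas Player B's equilibrium condition concerns $u_B$, not $u_A$. I would bridge this with the $A$--$E$ symmetry. Exchanging the roles of Players A and E gives $u_E(s_A,s_B,s_C,s_D,s_E)=u_A(s_E,s_B,s_C,s_D,s_A)$, so along the slice $s_A=s_E=\tilde s$, $s_C=s_D=\hat s$ we have $u_E(\tilde s,s_B,\hat s,\hat s,\tilde s)=u_A(\tilde s,s_B,\hat s,\hat s,\tilde s)$ for every $s_B$. With \eqref{e1} this yields $u_B(\tilde s,s_B,\hat s,\hat s,\tilde s)=-2\,u_A(\tilde s,s_B,\hat s,\hat s,\tilde s)$, so that maximizing $u_B$ over $s_B$ is exactly minimizing $u_A$ over $s_B$. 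Player B's equilibrium condition therefore gives $\tilde s=\arg\min_{s_B}u_A(\tilde s,s_B,\hat s,\hat s,\tilde s)$, while Player A's gives $\tilde s=\arg\max_{s_A}u_A(s_A,\tilde s,\hat s,\hat s,\tilde s)$. Together these say precisely that $(\tilde s,\tilde s)$ is a saddle point of $(s_A,s_B)\mapsto u_A(s_A,s_B,\hat s,\hat s,\tilde s)$.

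The group $\{C,D\}$ is easier because it is a genuine two-player zero-sum game: by \eqref{e1a}, $u_D=-u_C$, so Player D's equilibrium condition directly gives $\hat s=\arg\min_{s_D}u_C(\tilde s,\tilde s,\hat s,s_D,\tilde s)$ and Player C's gives $\hat s=\arg\max_{s_C}u_C(\tilde s,\tilde s,s_C,\hat s,\tilde s)$, making $(\hat s,\hat s)$ a saddle point of $(s_C,s_D)\mapsto u_C(\tilde s,\tilde s,s_C,s_D,\tilde s)$. To finish, I invoke the elementary fact that a saddle point forces the two iterated extrema to agree: from $f(x,y^*)\le f(x^*,y^*)\le f(x^*,y)$ one obtains $\min_y\max_x f\le\max_x f(x,y^*)=f(x^*,y^*)=\min_y f(x^*,y)\le\max_x\min_y f$, which together with the always-valid inequality $\max_x\min_y f\le\min_y\max_x f$ gives equality. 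Applying this to the two saddle points yields \eqref{as0} and \eqref{as0a}, and exhibits $\tilde s$ and $\hat s$ simultaneously as the maximin and the minimax strategies, which is the asserted coincidence.
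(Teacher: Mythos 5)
Your proof is correct and follows essentially the same route as the paper's: both establish that the symmetric equilibrium is a saddle point of $u_A$ in $(s_A,s_B)$ (and of $u_C$ in $(s_C,s_D)$) by combining the within-group zero-sum identity with the within-group symmetry, then apply the standard saddle-point-implies-minimax argument and the uniqueness assumptions to obtain the coincidence of the maximin and minimax strategies. The only cosmetic difference is the bridging step: you apply the zero-sum identity along the $s_B$-slice (so $u_B=-2u_A$ there, via the A--E symmetry, letting you use Player B's equilibrium condition directly), whereas the paper applies it along the $s_A$-slice (so $u_A=-2u_B$ there, via the B--E symmetry) and then transports Player A's condition into a minimization over $s_B$ by the A--B exchange symmetry.
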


\begin{proof}
Let $(s_A^*,s_B^*,s_C^*,s_D^*,s_E^*)$ be a Nash equilibrium which is symmetric in each group. Denote $s_A^*=s_B^*=s_E^*=s^*$, $s_C^*=s_D^*=s^{**}$. Since the game is zero-sum in each group,
\[u_A(s_A,s^*,s^{**},s^{**},s^*)+u_B(s_A,s^*,s^{**},s^{**},s^*)+u_E(s_A,s^*,s^{**},s^{**},s^*)=0,\]
and
\[u_C(s^*,s^*,s_C,s^{**},s^*)+u_D(s^*,s^*,s_C,s^{**},s^*)=0,\]
imply
\[u_A(s_A,s^*,s^{**},s^{**},s^*)=-(u_B(s_A,s^*,s^{**},s^{**},s^*)+u_E(s_A,s^*,s^{**},s^{**},s^*)),\]
\[u_C(s^*,s^*,s_C,s^{**},s^*)=-u_D(s^*,s^*,s_C,s^{**},s^*)\]
By symmetry for Players A, B and E
\[u_A(s_A,s^*,s^{**},s^{**},s^*)=-2u_B(s_A,s^*,s^{**},s^{**},s^*).\]
These equations hold for any $s_A$ and $s_C$. Therefore,
\[\arg\max_{s_A\in S_A}u_A(s_A,s^*,s^{**},s^{**},s^*)=\arg\min_{s_A\in S_A}u_B(s_A,s^*,s^{**},s^{**},s^*),\]
\[\arg\max_{s_C\in S_C}u_C(s^*,s^*,s_C,s^{**},s^*)=\arg\min_{s_C\in S_C}u_D(s^*,s^*,s_C,s^{**},s^*).\]
By the assumption of uniqueness of the best responses, they are unique. By symmetry for each group
\[\arg\max_{s_A\in S_A}u_A(s_A,s^*,s^{**},s^{**},s^*)=\arg\min_{s_B\in S_B}u_A(s^*,s_B,s^{**},s^{**},s^*),\]
\[\arg\max_{s_C\in S_C}u_C(s^*,s^*,s_C,s^{**},s^*)=\arg\min_{s_D\in S_D}u_C(s^*,s^*,s^{**},s_D,s^*).\]
Therefore,
\[u_A(s^*,s^*,s^{**},s^{**},s^*)=\min_{s_B\in S_B}u_A(s^*,s_B,s^{**},s^{**},s^*)\leq u_A(s^*,s_B,s^{**},s^{**},s^*),\]
\[u_C(s^*,s^*,s^{**},s^{**},s^*)=\min_{s_D\in S_D}u_C(s^*,s^*,s^{**},s_D,s^*)\leq u_C(s^*,s^*,s^{**},s_D,s^*).\]
We get
\[\max_{s_A\in S_A}u_A(s_A,s^*,s^{**},s^{**},s^*)=u_A(s^*,s^*,s^{**},s^{**},s^*)=\min_{s_B\in S_B}u_A(s^*,s_B,s^{**},s^{**},s^*),\]
\[\max_{s_C\in S_C}u_C(s^*,s^*,s_C,s^{**},s^*)=u_C(s^*,s^*,s^{**},s^{**},s^*)=\min_{s_D\in S_D}u_C(s^*,s^*,s^{**},s_D,s^*),\]
They mean
\begin{align}
&\min_{s_B\in S_B}\max_{s_A\in S_A}u_A(s_A,s_B,s^{**},s^{**},s^*)\leq \max_{s_A\in S_A}u_A(s_A,s^*,s^{**},s^{**},s^*) \label{e3} \\
=&\min_{s_B\in S_B}u_A(s^*,s_B,s^{**},s^{**},s^*)\leq \max_{s_A\in S_A}\min_{s_B\in S_B}u_A(s_A,s_B,s^{**},s^{**},s^*).\notag
\end{align}
and
\begin{align}
&\min_{s_D\in S_D}\max_{s_C\in S_C}u_C(s^*,s^*,s_C,s_D,s^*)\leq \max_{s_C\in S_C}u_C(s^*,s^*,s_C,s^{**},s^*) \label{e3a} \\
=&\min_{s_D\in S_D}u_C(s^*,s^*,s^{**},s_D,s^*)\leq \max_{s_C\in S_C}\min_{s_D\in S_D}u_C(s^*,s^*,s_C,s_D,s^*).\notag
\end{align}
On the other hand, since
\[\min_{s_B\in S_B}u_A(s_A,s_B,s^{**},s^{**},s^*)\leq u_A(s_A,s_B,s^{**},s^{**},s^*),\]
\[\min_{s_C\in S_C}u_C(s^*,s^*,s_C,s_D,s^*)\leq u_C(s^*,s^*,s_C,s_D,s^*),\]
we have
\[\max_{s_A\in S_A}\min_{s_B\in S_B}u_A(s_A,s_B,s^{**},s^{**},s^*)\leq \max_{s_A\in S_A}u_A(s_A,s_B,s^{**},s^{**},s^*),\]
\[\max_{s_C\in S_C}\min_{s_D\in S_D}u_C(s^*,s^*,s_C,s_D,s^*)\leq \max_{s_C\in S_C}u_C(s^*,s^*,s_C,s_D,s^*).\]
These inequalities hold for any $s_B$ and $s_D$. Thus,
\[\max_{s_A\in S_A}\min_{s_B\in S_B}u_A(s_A,s_B,s^{**},s^{**},s^*)\leq \min_{s_B\in S_B}\max_{s_A\in S_A}u_A(s_A,s_B,s^{**},s^{**},s^*),\]
\[\max_{s_C\in S_C}\min_{s_D\in S_D}u_C(s^*,s^*,s_C,s_D,s^*)\leq \min_{s_D\in S_D}\max_{s_C\in S_C}u_C(s^*,s^*,s_C,s_D,s^*),\]
With (\ref{e3}) and (\ref{e3a}), we obtain
\begin{equation}
\max_{s_A\in S_A}\min_{s_B\in S_B}u_A(s_A,s_B,s^{**},s^{**},s^*)=\min_{s_B\in S_B}\max_{s_A\in S_A}u_A(s_A,s_B,s^{**},s^{**},s^*),\label{t1-1}
\end{equation}
\begin{equation}
\max_{s_C\in S_C}\min_{s_D\in S_D}u_C(s^*,s^*,s_C,s_D,s^*)=\min_{s_D\in S_D}\max_{s_C\in S_C}u_C(s^*,s^*,s_C,s_D,s^*).\label{t1-1a}
\end{equation}
By symmetry for each group
\[\max_{s_A\in S_A}\min_{s_E\in S_E}u_A(s_A,s^*,s^{**},s^{**},s_E)=\min_{s_E\in S_E}\max_{s_A\in S_A}u_A(s_A,s^*,s^{**},s^{**},s_E),\]
\[\max_{s_D\in S_D}\min_{s_C\in S_C}u_D(s^*,s^*,s_C,s_D,s^*)=\min_{s_C\in S_C}\max_{s_D\in S_D}u_D(s^*,s^*,s_C,s_D,s^*),\]
and so on. (\ref{e3}), (\ref{e3a}), (\ref{t1-1}) and (\ref{t1-1a}) imply
\[\max_{s_A\in S_A}\min_{s_B\in S_B}u_A(s_A,s_B,s^{**},s^{**},s^*)=\max_{s_A\in S_A}u_A(s_A,s^*,s^{**},s^{**},s^*),\]
\[\max_{s_C\in S_C}\min_{s_D\in S_D}u_C(s^*,s^*,s_C,s_D,s^*)=\max_{s_C\in S_C}u_C(s^*,s^*,s_C,s^{**},s^*),\]
\[\min_{s_B\in S_B}\max_{s_A\in S_A}u_A(s_A,s_B,s^{**},s^{**},s^*)=\min_{s_B\in S_B}u_A(s^*,s_B,s^{**},s^{**},s^*),\]
\[\min_{s_D\in S_D}\max_{s_C\in S_C}u_C(s^*,s^*,s_C,s_D,s^*)=\min_{s_D\in S_D}u_C(s^*,s^*,s^{**},s_D,s^*).\]

From
\[\min_{s_B\in S_B}u_A(s_A,s_B,s^{**},s^{**},s^*)\leq u_A(s_A,s^*,s^{**},s^{**},s^*),\]
\[\min_{s_D\in S_D}u_C(s^*,s^*,s_C,s_D,s^*)\leq u_C(s^*,s^*,s_C,s^{**},s^*),\]
\[\max_{s_A\in S_A}\min_{s_B\in S_B}u_A(s_A,s_B,s^{**},s^{**},s^*)=\max_{s_A\in S_A}u_A(s_A,s^*,s^{**},s^{**}, s^*),\]
and
\[\max_{s_C\in S_C}\min_{s_D\in S_D}u_C(s^*,s^*,s_C,s_D,s^*)=\max_{s_C\in S_C}u_C(s^*,s^*,s_C,s^{**}, s^*),\]
we have
\[\arg\max_{s_A\in S_A}\min_{s_B\in S_B}u_A(s_A,s_B,s^{**},s^{**},s^*)=\arg\max_{s_A\in S_A}u_A(s_A,s^*,s^{**},s^{**},s^*)=s^*,\]
\[\arg\max_{s_C\in S_C}\min_{s_D\in S_D}u_C(s^*,s^*,s_C,s_D,s^*)=\arg\max_{s_C\in S_C}u_C(s^*,s^*s_C,s^*,s^*)=s^{**}.\]
From
\[\max_{s_A\in S_A}u_A(s_A,s_B,s^{**},s^{**},s^*)\geq u_A(s^*,s_B,s^{**},s^{**},s^*),\]
\[\max_{s_C\in S_C}u_C(s^*,s^*,s_C,s_D,s^*)\geq u_C(s^*,s^*,s^{**},s_D,s^*),\]
\[\min_{s_B\in S_B}\max_{s_A\in S_A}u_A(s_A,s_B,s^{**},s^{**},s^*)=\min_{s_B\in S_B}u_A(s^*,s_B,s^{**},s^{**},s^*),\]
and
\[\min_{s_D\in S_D}\max_{s_C\in S_C}u_C(s^*,s^*,s_C,s_D,s^*)=\min_{s_D\in S_D}u_C(s^*,s^*,s^{**},s_D,s^*),\]
we get
\[\arg\min_{s_B\in S_B}\max_{s_A\in S_A}u_A(s_A,s_B,s^{**},s^{**},s^*)=\arg\min_{s_B\in S_B}u_A(s^*,s_B,s^{**},s^{**},s^*)=s^*,\]
\[\arg\min_{s_D\in S_D}\max_{s_C\in S_C}u_C(s^*,s^*,s_C,s_D,s^*)=\arg\min_{s_D\in S_D}u_C(s^*,s^*,s^{**},s_D,s^*)=s^{**}.\]
Therefore,
\begin{equation}
\arg\max_{s_A\in S_A}\min_{s_B\in S_B}u_A(s_A,s_B,s^{**},s^{**},s^*)=\arg\min_{s_B\in S_B}\max_{s_A\in S_A}u_A(s_A,s_B,s^{**},s^{**},s^*)=s^*,\label{t1-2}
\end{equation}
\begin{equation}
\arg\max_{s_C\in S_C}\min_{s_D\in S_D}u_C(s^*,s^*,s_C,s_D,s^*)=\arg\min_{s_D\in S_D}\max_{s_C\in S_C}u_C(s^*,s^*,s_C,s_D,s^*)=s^{**}.\label{t1-2a}
\end{equation}
By symmetry for each group we get
\[\arg\max_{s_A\in S_A}\min_{s_E\in S_E}u_A(s_A,s^*,s^{**},s^{**},s_E)=\arg\min_{s_E\in S_E}\max_{s_A\in S_A}u_A(s_A,s^*,s^{**},s^{**},s_E)=s^*,\]
\[\arg\max_{s_D\in S_D}\min_{s_C\in S_C}u_D(s^*,s^*,s_C,s_D,s^*)=\arg\min_{s_C\in S_C}\max_{s_D\in S_D}u_D(s^*,s^*,s_C,s_D,s^*)=s^{**},\]
and so on.
\end{proof}

Next we show the following theorem. 
\begin{theorem}
Sion's minimax theorem with the coincidence of the maximin strategy and the minimax strategy imply the existence of a Nash equilibrium which is symmetric in each group.
\end{theorem}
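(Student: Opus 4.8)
The plan is to begin from the fixed point $(\tilde s,\hat s)$ constructed above, which by (\ref{fix}) and (\ref{fixa}) is the maximin point of $A$ in the $A$--$B$ subgame with the other variables frozen at $(\hat s,\hat s,\tilde s)$ and the maximin point of $C$ in the $C$--$D$ subgame with the others frozen at $(\tilde s,\tilde s,\tilde s)$, and to show that the symmetric profile $s_A^*=s_B^*=s_E^*=\tilde s$, $s_C^*=s_D^*=\hat s$ is a Nash equilibrium. The driving fact is the standard one that, whenever the minimax equality holds, the maximin strategy of the maximizing player and the minimax strategy of the minimizing player together form a saddle point. For the $A$--$B$ subgame this produces a saddle point of $u_A(\cdot,\cdot,\hat s,\hat s,\tilde s)$; by (\ref{fix}) the maximin strategy of $A$ is $\tilde s$, and by the assumed coincidence of the maximin and minimax strategies the minimax strategy of $B$ equals it, so the saddle point lies at $(\tilde s,\tilde s)$ and
\begin{equation*}
u_A(s_A,\tilde s,\hat s,\hat s,\tilde s)\le u_A(\tilde s,\tilde s,\hat s,\hat s,\tilde s)\le u_A(\tilde s,s_B,\hat s,\hat s,\tilde s)\ \text{ for all }s_A\in S_A,\ s_B\in S_B.
\end{equation*}
Using (\ref{fixa}) for the $C$--$D$ subgame and symmetry for the $E$--$B$ subgame, the corresponding saddle points sit at $(\hat s,\hat s)$ and at $(\tilde s,\tilde s)$.

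Next I would read off the equilibrium conditions for the three maximizing players $A$, $C$ and $E$ directly from the left-hand inequalities of their saddle points; for $A$ this is precisely $u_A(\tilde s,\tilde s,\hat s,\hat s,\tilde s)\ge u_A(s_A,\tilde s,\hat s,\hat s,\tilde s)$, and the arguments for $C$ and $E$ are identical.

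For the two minimizing players $B$ and $D$ I would turn their maximization problems into the minimizations governed by the right-hand inequalities, via the zero-sum structure. In the two-player group, $u_C+u_D=0$ gives $u_D(\tilde s,\tilde s,\hat s,\cdot,\tilde s)=-u_C(\tilde s,\tilde s,\hat s,\cdot,\tilde s)$, so maximizing $u_D$ over $s_D$ amounts to minimizing $u_C$, which the right inequality of $C$'s saddle point attains at $\hat s$. In the three-player group I would first invoke the $A$--$E$ symmetry at the profile $(\tilde s,s_B,\hat s,\hat s,\tilde s)$, where $A$ and $E$ play the same strategy, to get $u_A=u_E$ there, and then the zero-sum identity $u_A+u_B+u_E=0$ to obtain $u_B(\tilde s,s_B,\hat s,\hat s,\tilde s)=-2u_A(\tilde s,s_B,\hat s,\hat s,\tilde s)$; maximizing $u_B$ over $s_B$ is thus minimizing $u_A(\tilde s,\cdot,\hat s,\hat s,\tilde s)$, which the right inequality of $A$'s saddle point attains at $s_B=\tilde s$. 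Together these show that every player is best responding, so the profile is a Nash equilibrium that is symmetric in each group.

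The step I expect to be the main obstacle is the three-player group: unlike the clean $u_D=-u_C$ reduction for $D$, establishing that $\tilde s$ is $B$'s best response requires using the zero-sum identity and the $A$--$E$ symmetry simultaneously to collapse $u_A+u_E$ into $2u_A$, and this works only because the symmetry is evaluated exactly at the candidate profile, where $A$ and $E$ coincide. I would also be careful to use only the coincidences that are actually assumed---those for the $A$--$B$, $E$--$B$ and $C$--$D$ orderings---since the reversed orderings are explicitly not assumed in the model.
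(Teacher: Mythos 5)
Your proposal is correct, and its skeleton matches the paper's proof: both start from the fixed point $(\tilde s,\hat s)$, use Sion's equality together with the assumed coincidence of maximin and minimax strategies to place a saddle point of $u_A(\cdot,\cdot,\hat s,\hat s,\tilde s)$ at $(\tilde s,\tilde s)$, of $u_C(\tilde s,\tilde s,\cdot,\cdot,\tilde s)$ at $(\hat s,\hat s)$, and, by within-group symmetry, of $u_E$ in the E--B subgame at $(\tilde s,\tilde s)$; and both read off the best-response conditions of the maximizing players A, C, E from the corresponding saddle inequalities. Where you genuinely depart from the paper is in handling the minimizing players B and D. The paper disposes of them with a terse ``by symmetry we get \dots and so on,'' i.e., by exchanging player roles within each group, whereas you convert their maximization problems into minimizations via the zero-sum identities: $u_D=-u_C$ in the two-player group, and $u_B=-2u_A$ in the three-player group, the latter obtained by combining $u_A+u_B+u_E=0$ with the A--E coincidence $u_A=u_E$ at profiles where A and E both play $\tilde s$. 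This is precisely the device the paper itself uses in the proof of Theorem \ref{t1} (there to get $u_A=-2u_B$), so in effect you make Theorem 2's vaguest step explicit by importing Theorem 1's technique. The cost is a slightly longer argument; the benefit is that nothing is delegated to an unspecified symmetry, and your closing caution---using only the A--B, E--B and C--D coincidences that the model actually assumes, never the reversed orderings---is exactly the discipline the paper's setup demands.
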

\begin{proof}
Let $\tilde{s}$ and $\hat{s}$ be the values of $s_i,\ i=A,B,C,D,E$ such that
\[\tilde{s}=\arg\max_{s_A\in S_A}\min_{s_B\in S_B}u_A(s_A,s_B,\hat{s},\hat{s},\tilde{s})=\arg\min_{s_B\in S_B}\max_{s_A\in S_A}u_A(s_A,s_B,\hat{s},\hat{s},\tilde{s}),\]
\[\hat{s}=\arg\max_{s_C\in S_C}\min_{s_D\in S_D}u_C(\tilde{s},\tilde{s},s_C,s_D,\tilde{s})=\arg\min_{s_D\in S_D}\max_{s_C\in S_C}u_C(\tilde{s},\tilde{s},s_C,s_D,\tilde{s}),\]
\begin{align*}
&\max_{s_A\in S_A}\min_{s_B\in S_B}u_A(s_A,s_B,\hat{s},\hat{s},\tilde{s})=\min_{s_B\in S_B}u_A(\tilde{s},s_B,\hat{s},\hat{s},\tilde{s})=\min_{s_B\in S_B}\max_{s_A\in S_A}u_A(s_A,s_B,\hat{s},\hat{s},\tilde{s})\\
&=\max_{s_A\in S_A}u_A(s_A,\tilde{s},\hat{s},\hat{s},\tilde{s}),
\end{align*}
and
\begin{align*}
&\max_{s_C\in S_C}\min_{s_D\in S_D}u_C(\tilde{s},\tilde{s},s_C,s_D,\tilde{s})=\min_{s_D\in S_D}u_C(\tilde{s},\tilde{s},\hat{s},s_D,\tilde{s})=\min_{s_D\in S_D}\max_{s_C\in S_C}u_C(\tilde{s},\tilde{s},s_C,s_D,\tilde{s})\\
&=\max_{s_C\in S_C}u_C(\tilde{s},\tilde{s},s_C,\hat{s},\tilde{s}).
\end{align*}
Since
\[u_A(\tilde{s},s_B,\hat{s},\hat{s},\tilde{s})\leq \max_{s_A\in S_A}u_A(s_A,s_B,\hat{s},\hat{s},\tilde{s}),\]
\[\min_{s_B\in S_B}u_A(\tilde{s},s_B,\hat{s},\hat{s},\tilde{s})=\min_{s_B\in S_B}\max_{s_A\in S_A}u_A(s_A,s_B,\hat{s},\hat{s},\tilde{s}),\]
we get
\[\arg\min_{s_B\in S_B}u_A(\tilde{s},s_B,\hat{s},\hat{s},\tilde{s})=\arg\min_{s_B\in S_B}\max_{s_A\in S_A}u_A(s_A,s_B,\hat{s},\hat{s},\tilde{s})=\tilde{s}.\]
Similarly, from
\[u_C(\tilde{s},\tilde{s},\hat{s},s_D,\tilde{s})\leq \max_{s_C\in S_C}u_C(\tilde{s},\tilde{s},s_C,s_D,\tilde{s}),\]
\[\min_{s_D\in S_D}u_C(\tilde{s},\tilde{s},\hat{s},s_D,\tilde{s})=\min_{s_D\in S_D}\max_{s_C\in S_C}u_C(\tilde{s},\tilde{s},s_C,s_D,\tilde{s}),\]
we get
\[\arg\min_{s_D\in S_D}u_C(\tilde{s},\tilde{s},\hat{s},s_D,\tilde{s})=\arg\min_{s_D\in S_D}\max_{s_C\in S_C}u_C(\tilde{s},\tilde{s},s_C,s_D,\tilde{s})=\hat{s}.\]
Since
\[u_A(s_A,\tilde{s},\hat{s},\hat{s},\tilde{s})\geq \min_{s_B\in S_B}u_A(s_A,s_B,\hat{s},\hat{s},\tilde{s}),\]
and
\[\max_{s_A\in S_A}u_A(s_A,\tilde{s},\hat{s},\hat{s},\tilde{s})=\max_{s_A\in S_A}\min_{s_B\in S_B}u_A(s_A,s_B,\hat{s},\hat{s},\tilde{s}),\]
we obtain
\[\arg\max_{s_A\in S_A}u_A(s_A,\tilde{s},\hat{s},\hat{s},\tilde{s})=\arg\max_{s_A\in S_A}\min_{s_B\in S_B}u_A(s_A,s_B,\hat{s},\hat{s},\tilde{s})=\tilde{s}.\]
Similarly, from
\[u_C(\tilde{s},\tilde{s},s_C,\hat{s},\tilde{s})\geq \min_{s_D\in S_D}u_C(\tilde{s},\tilde{s},s_C,s_D,\tilde{s}),\]
and
\[\max_{s_C\in S_C}u_C(\tilde{s},\tilde{s},s_C,\hat{s},\tilde{s})=\max_{s_C\in S_C}\min_{s_D\in S_D}u_C(\tilde{s},\tilde{s},\hat{s},s_D,\tilde{s}),\]
we obtain
\[\arg\max_{s_C\in S_C}u_C(\tilde{s},\tilde{s},s_C,\hat{s},\tilde{s})=\arg\max_{s_C\in S_C}\min_{s_D\in S_D}u_C(\tilde{s},\tilde{s},s_C,s_D)=\hat{s}.\]
Therefore,
\[u_A(\tilde{s},s_B,\hat{s},\hat{s},\tilde{s})\geq u_A(\tilde{s},\tilde{s},\hat{s},\hat{s},\tilde{s})\geq u_A(s_A,\tilde{s},\hat{s},\hat{s},\tilde{s}),\]
\[u_C(\tilde{s},\tilde{s},\hat{s},s_D,\tilde{s})\geq u_C(\tilde{s},\tilde{s},\hat{s},\hat{s},\tilde{s})\geq u_C(\tilde{s},\tilde{s},s_C,\hat{s},\tilde{s}).\]
By symmetry we get
\[u_A(\tilde{s},\tilde{s},\hat{s},\hat{s},s_E)\geq u_A(\tilde{s},\tilde{s},\hat{s},\hat{s},\tilde{s})\geq u_A(s_A,\tilde{s},\hat{s},\hat{s},\tilde{s}),\]
\[u_B(\tilde{s},\tilde{s},\hat{s},\hat{s},s_E)\geq u_B(\tilde{s},\tilde{s},\hat{s},\hat{s},\tilde{s})\geq u_B(\tilde{s},s_B,\hat{s},\hat{s},\tilde{s}),\]
\[u_E(\tilde{s},s_B,\hat{s},\hat{s},\tilde{s})\geq u_E(\tilde{s},\tilde{s},\hat{s},\hat{s},\tilde{s})\geq u_E(\tilde{s},\tilde{s},\hat{s},\hat{s},s_E),\]
\[u_D(\tilde{s},\tilde{s},s_C,\hat{s},\tilde{s})\geq u_D(\tilde{s},\tilde{s},\hat{s},\hat{s},\tilde{s})\geq u_D(\tilde{s},\tilde{s},\hat{s},s_D,\tilde{s}),\]
and so on.

Thus, $(s_A,s_B,s_C,s_D,s_E)=(\tilde{s},\tilde{s},\hat{s},\hat{s},\tilde{s})$ is a Nash equilibrium which is symmetric in each group.
\end{proof}

\section{An example: Relative profit maximizing oligopoly in each group with two groups}\label{ex}

Consider a five-players game. The players are A, B, C, D and E. Suppose that the payoff functions of Players A, B and E are symmetric, and those of Players C and D are symmetric. The payoff functions of the players are
\begin{align*}
\pi_A=&(a-x_A-x_B-x_E-bx_C-bx_D)x_A-c_Ax_A-\frac{1}{2}[(a-x_A-x_B-x_E-bx_C-bx_D)x_B-c_Ax_B\\
&+(a-x_A-x_B-x_E-bx_C-bx_D)x_E-c_Ax_E],
\end{align*}
\begin{align*}
\pi_B=&(a-x_A-x_B-x_E-bx_C-bx_D)x_B-c_Ax_B-\frac{1}{2}[(a-x_A-x_B-x_E-bx_C-bx_D)x_A-c_Ax_A\\
&+(a-x_A-x_B-x_E-bx_C-bx_D)x_E-c_Ax_E],
\end{align*}
\begin{align*}
\pi_E=&(a-x_A-x_B-x_E-bx_C-bx_D)x_E-c_Ax_E-\frac{1}{2}[(a-x_A-x_B-x_E-bx_C-bx_D)x_A-c_Ax_A\\
&+(a-x_A-x_B-x_E-bx_C-bx_D)x_B-c_Ax_B],
\end{align*}
\[\pi_C=(a-x_C-x_D-bx_A-bx_B-bx_E)x_C-c_Cx_C-[(a-x_C-x_D-bx_A-bx_B-bx_E)x_D-c_Cx_D],\]
\[\pi_D=(a-x_C-x_D-bx_A-bx_B-bx_E)x_D-c_Cx_D-[(a-x_C-x_D-bx_A-bx_B-bx_E)x_C-c_Cx_C],\]
This is a model of relative profit maximization in each group in a five firms oligopoly with two groups. $x_A$, $x_B$, $x_C$, $x_D$ and $x_E$ are the outputs of the firms, and $p_A$, $p_B$, $p_C$, $p_D$ and $p_E$ are the prices of their goods. The demand functions are symmetric for Firms A, B and E, and they have the same cost functions. On the other hand, the demand functions are symmetric for Firms C and D, and they have the same cost functions. However, the demand function for Firm A (or B or E) is not symmetric for Firm C (or D), and the demand function for Firm C (or D) is not symmetric for Firm A (or B or E). Firm A's (or Firm B's or Firm E's) cost function is different from the cost function of Firm C (or Firm D). The cost functions of the firms are linear and there is no fixed cost.

We assume that Firm A (or B or E) maximizes its profit relatively to the profit of Firm B and E (or A and E, or A and B), and  Firm C (or D) maximizes its profit relatively to the profit of Firm D (or C). Note that
\[\pi_A+\pi_B+\pi_E=0,\ \pi_C+\pi_D=0.\]
Thus, this is a model of zero-sum game in each group with two groups.

Under the assumption of Cournot type behavior, the equilibrium outputs are
\[x_A=\frac{bc_C-c_A-ab+a}{3(1-b)(1+b)},\]
\[x_B=\frac{bc_C-c_A-ab+a}{3(1-b)(1+b)},\]
\[x_C=\frac{bc_A-c_C-ab+a}{2(1-b)(1+b)},\]
\[x_D=\frac{bc_A-c_C-ab+a}{2(1-b)(1+b)},\]
\[x_E=\frac{bc_A-c_C-ab+a}{3(1-b)(1+b)}.\]
The equilibrium prices of the goods are
\[p_A=c_A,\]
\[p_B=c_A,\]
\[p_C=c_C,\]
\[p_D=c_C,\]
\[p_E=c_A.\]
Therefore, the prices of the goods are equal to the marginal costs in each group.

The maximin and minimax strategies between Firms A and B are
\[\arg\max_{x_A}\min_{x_B}\pi_A,\ \arg\min_{x_B}\max_{x_A}\pi_A.\]
Those between Firm C and D are
\[\arg\max_{x_C}\min_{x_D}\pi_C,\ \arg\min_{x_D}\max_{x_C}\pi_C.\]
Those between Firms A and E and so on are similarly defined.

In our example we obtain
\[\arg\max_{x_A}\min_{x_B}\pi_A=\frac{bc_C-c_A-ab+a}{3(1-b)(1+b)},\]
\[\arg\min_{x_B}\max_{x_A}\pi_A=\frac{bc_C-c_A-ab+a}{3(1-b)(1+b)},\]
\[\arg\max_{x_C}\min_{x_D}\pi_C=\frac{bc_A-c_C-ab+a}{2(1-b)(1+b)},\]
\[\arg\min_{x_D}\max_{x_C}\pi_C=\frac{bc_A-c_C-ab+a}{2(1-b)(1+b)},\]
\[\arg\max_{x_A}\min_{x_E}\pi_A=\frac{bc_C-c_A-ab+a}{3(1-b)(1+b)},\]
and so on. They are the same as Nash equilibrium strategies.

\section{Concluding Remark}

In this paper we have examined the relation between Sion's minimax theorem for a continuous function and a Nash equilibrium in . We want to extend this result to more general multi-players game.

\end{document}